\long\def\ignore#1{}
\newcommand{\ra}{\rightarrow}
\newcommand{\lra}{\longrightarrow}
\newcommand{\lolli}{\mbox{$-\!\circ$}}
\newcommand{\vdashi}{\vdash_{\!I}}
\newcommand{\qqob}{\quest Q^{0\!^{\bot}}}
\newcommand{\qob}[1]{\quest (#1^0)^\bot}
\newcommand{\fgob}{\vdash (\Gamma^0)^\bot}
\newcommand{\uob}[1]{(#1^0)^\bot}
\newcommand{\none}{{-\!1}}
\newcommand{\pone}{{+1}}
\newcommand{\fgnb}{\vdash (\Gamma^\none)^\bot}
\newcommand{\unb}[1]{(#1^\none)^\bot}
\newcommand{\leftfcs}[1]{\stackrel{#1}{\longrightarrow}}
\newcommand{\rightfcs}[1]{\mathrel{-_{\kern .6pt #1\/}\kern -.1pt\mathord{\rightarrow}}}
\newcommand{\wedgep}{\wedge^{\!+}}
\newcommand{\wedgen}{\wedge^{\!-}}
\newcommand{\veep}{\vee^{\!+}}
\newcommand{\veen}{\vee^{\!-}}
\newcommand{\implp}{\supset^{\!+}}
\newcommand{\impln}{\supset^{\!-}}
\newcommand{\boxr}{\mbox{$[]_r$}}
\newcommand{\boxl}{\mbox{$[]_l$}}
\newcommand{\ljf}{\hbox{\sl LJF}}
\newcommand{\lkf}{\hbox{\sl LKF}}
\newcommand{\ddp}[1]{#1^{\approx}}
\newcommand{\negg}{\sim\!}
\newcommand{\mkpos}[1]{\partial^+(#1)}
\newcommand{\mkneg}[1]{\partial^-(#1)}
\newdimen\PLyy\newdimen\PLX
\newbox\PLdot \setbox\PLdot\hbox{\tiny.} \def\scl{.08} 
\def\PLot#1{\PLx`#1\advance\PLx-42\PLy\PLx\PLv\PLx\divide\PLy9\PLw\PLy\multiply
\PLw9\advance\PLx-\PLw\advance\PLx-4\PLy-\PLy\advance\PLy4\PLX=\the\PLx pt
\advance\PLyy\the\PLy pt\wd\PLdot=\scl\PLX\raise\scl\PLyy\copy\PLdot}
\def\draw#1{\ifx#1\end\let\next=\relax\else\PLot#1\let\next=\draw\fi\next}
\def\invamp{\hbox{\PLyy=70pt\draw :::;DMV_gqppyyyyyooooxxxnnwvlutkjaWNE=5-./9
9:::CCCC:::99/..--544=EENWWaajjjkktttttttNNNVVVVVVVV\end \hskip8pt}}
\newbox\iabox\setbox\iabox\invamp \def\Invamp{\copy\iabox}
\def\relbar{\mathrel{\smash-}}
\def\joinrelm{\mathrel{\mkern-3mu}}
\def\tailpiece{\kern 1pt\vrule height 1ex width 0.3ex depth -.3ex}
\def\seqsym{\mathrel{\tailpiece\joinrelm\relbar}}
\newcommand{\true}{\hbox{\sl true}}
\newcommand{\false}{\hbox{\sl false}}
\newcommand{\bang}{\mathop{!}}
\newcommand{\quest}{\mathop{?}}
\newcommand{\with}{\mathbin{\&}}
\newcommand{\ot}{\otimes}
\newdimen\PLyy\newdimen\PLX
\newbox\PLdot \setbox\PLdot\hbox{\tiny.} \def\scl{.08} 
\def\PLot#1{\PLx`#1\advance\PLx-42\PLy\PLx\PLv\PLx\divide\PLy9\PLw\PLy\multiply
\PLw9\advance\PLx-\PLw\advance\PLx-4\PLy-\PLy\advance\PLy4\PLX=\the\PLx pt
\advance\PLyy\the\PLy pt\wd\PLdot=\scl\PLX\raise\scl\PLyy\copy\PLdot}
\def\draw#1{\ifx#1\end\let\next=\relax\else\PLot#1\let\next=\draw\fi\next}
\def\invamp{\hbox{\PLyy=70pt\draw :::;DMV_gqppyyyyyooooxxxnnwvlutkjaWNE=5-./9
9:::CCCC:::99/..--544=EENWWaajjjkktttttttNNNVVVVVVVV\end \hskip4pt}}
\newbox\iabox\setbox\iabox\invamp \def\Invamp{\copy\iabox}
\newcommand\lpar{\mathrel{\Invamp\;}}
\newcommand{\limp}{\mathbin{-\hspace{-0.70mm}\circ}}
\newcommand{\one}{{\bf 1}}
\newcommand{\zero}{{\bf 0}}
\newcommand{\nng}[1]    {#1^\perp}
\newcommand{\tseqsym}{\mathrel{\kern3pt\seqsym\kern-10pt\seqsym}}
\newcommand{\oneseq}[1]{\seqsym #1}
\newcommand{\twoseq}[2]{\strut#1 \seqsym #2}
\newcommand{\Up}[3]{#1\colon#2\Uparrow #3}
\newcommand{\Down}[3]{#1\colon#2\Downarrow #3}
\newcommand{\myhead}[1]{\noalign{\hbox to \the\hsize{\hfill\bf #1\strut\qquad\hfill}}}
\newcommand{\fib}[1]{\hbox{\sl fib}(#1)}
\newcommand{\LJQp}{\hbox{\sl LJQ$'$}}
\newbox\tallstrutbox 
\begin{document}
\pagestyle{plain}
\title{Focusing and Polarization in Intuitionistic Logic}
\author{Chuck Liang\inst{1} \and Dale Miller\inst{2}}
\institute{Department of Computer Science, 
Hofstra University, Hempstead, NY 11550\\
\email{chuck.liang at hofstra.edu}\\ 
\and
INRIA \& LIX/Ecole Polytechnique, Palaiseau, France\\
\email{dale.miller at inria.fr}
}

\maketitle

\begin{abstract}
A focused proof system provides a normal form to cut-free proofs that
structures the application of invertible and non-invertible inference
rules.
The focused proof system of Andreoli for linear logic has been applied
to both the {\em proof search} and the {\em proof
normalization} approaches to computation.  Various proof systems 
in literature exhibit characteristics of focusing to one degree or another.
We present a new, focused proof system for intuitionistic logic, called
\ljf, and show how other proof systems can be mapped
into the new system by inserting logical connectives that prematurely
stop focusing.  We also use \ljf\ to design a focused proof system for
classical logic.  Our approach to the design and analysis of these
systems is based on the completeness of focusing in
linear logic and on the notion of polarity that appears in Girard's LC
and LU proof systems.
\end{abstract}

\section{Introduction}

Cut-elimination provides an important normal form for
sequent calculus proofs.  But what normal forms can we uncover about
the structure of cut-free proofs?  Since cut-free proofs play
important roles in the foundations of computation, such normal forms
might find a range of applications in the proof normalization
foundations for functional programming or in the proof search
foundations of logic programming.

\subsection{About focusing}
\label{about}

Andreoli's {\em focusing} proof system for linear logic (the {\em
triadic} proof system of \cite{andreoli92jlc}) provides a
normal form for cut-free proofs in linear logic.  Although we describe
this system, here called {\em LLF,\/} in more detail in
Section~\ref{llfocus}, we highlight two aspect of focusing proofs
here.  First, linear logic connectives can be divided into the {\em
asynchronous} connectives, whose right-introduction rules are
invertible, and the {\em synchronous} connectives, whose right
introduction rules are not (generally) invertible.  The search for a
focused proof can capitalize on this classification by applying
(reading inference rules from conclusion to premise) all
invertible rules in any order (without the need for backtracking) and
by applying a chain of non-invertible rules that focus on a given formula and
its positive subformulas.  Such a chain of applications, usually
called a {\em focus}, terminates when it reaches an asynchronous
formula.  Proof search can then alternate between applications of
asynchronous introduction rules and chains of synchronous introduction
rules.

A second aspect of focusing proofs is that the
synchronous/asynchronous classification of non-atomic formulas must be
extended to atomic formulas.  The arbitrary assignment of positive
(synchronous) and negative (asynchronous) {\em bias} to atomic
formulas can have a major impact on, not the existence of focused
proofs, but the shape of focused proofs.  For example, consider the
Horn clause specification of the Fibonacci series:
$$ \fib{0,0}\land \fib{1,1}\land
   \forall n\forall f\forall f'
   [\fib{n,f}\land\fib{n+1,f'}\supset \fib{n+2,f+f'}].
$$
If all atomic formulas are given negative bias, then the only focused
proofs of $\fib{n,f_n}$ are those that can be classified as ``backward
chaining'' (the size of the smallest one being exponential in $n$).
On the other hand, if all atomic formulas are given positive bias,
then the only focused proofs are those that can be classified as
``forward chaining'' (the size of the smallest one being linear in
$n$).

\subsection{Results}

The contributions of this paper are the following.  First, we
introduce in Section~\ref{ljf} a new focusing proof system \ljf\ and
show that it is sound and complete for intuitionistic logic.  Notable
features of \ljf\ are that it allows for atoms of different bias and
it contains two versions of conjunction: while these conjunctions are
logically equivalent, they are affected by focusing differently.
Second, in Section~\ref{framework}, we show how several other focusing
proof systems can be captured in \ljf, in the sense of {\em full
completeness} (one-to-one correspondence between proofs in different
systems).  One should note that while there are many focusing proof
systems for intuitionistic logic in the literature, we appear to be
the first to provide a single (intuitionistic) framework for capturing
many of them.  Third, in Section~\ref{classical}, we use \ljf\ to
derive \lkf, a focusing system for classical logic.

\subsection{Methodology and Related work}
\label{related}

There are a number of sequent calculus proof systems known to be
complete for intuitionistic logic that exhibit characteristics of
focusing.
Some of these proof systems are based on fixing globally on either
forward chaining or backward chaining.  The early work on {\it uniform
proofs} \cite{miller91apal} and the {\em LJT\/} proof system
\cite{herbelin95phd} are both backward chaining calculi (all atoms
have negative bias).  The {\em LJQ\/} calculus
\cite{herbelin95phd,dyckhoff06cie} similarly selects the global
preference to be forward chaining (all atoms have positive bias).
Less has been published about systems that allow for mixing bias on
atoms.  The $\lambda$RCC proof system of Jagadeesan, Nadathur, and
Saraswat \cite{jagadeesan05fsttcs} allows for both forward chaining
and backward chaining in a superset of the hereditary Harrop fragment
of intuitionistic logic.  Chaudhuri, Pfenning, and Price in
\cite{chaudhuri06ijcar} observed that focusing proofs with mixed
biases on atoms can form a declarative basis for mixing forward and
backward chaining within Horn clauses.  The PhD theses of Howe
\cite{howe98phd} and Chaudhuri \cite{chaudhuri06phd} also explored
various focusing proof systems for linear and intuitionistic logic.

We are interested in providing a flexible and unifying framework that
can collect together important aspects of many of these proof systems.
There are several ways to motivate and validate the design of such a
system.  One approach stays entirely within intuitionistic logic and
works directly with invertibility and permutability of inference
rules.  Such an approach has been taken in many papers, such as
\cite{miller91apal,pfenning04ln,dyckhoff06cie}.  Our approach uses
linear logic, with its exponential operators $\bang$ and $\quest$, as
a unifying framework for looking at intuitionistic (and classical)
logic.  The fact that Andreoli's focused system was defined for full
linear logic provides us with a convenient platform for exploring the
issues around focusing and polarity.  We translate intuitionistic
logic into linear logic, then show that proof systems for
intuitionistic logic match focused proofs of the translated image
(Section~\ref{trans}).  A crucial aspect of understanding focusing in
intuitionistic logic is provided by identifying the precise
relationship between Andreoli's notion of polarity with Girard's
notion of polarity found in the LC \cite{girard91mscs} and LU
\cite{girard93apal} systems (Section~\ref{polarity}).

Another system concerning polarity and focusing is found in the work
of Danos, Joinet and Schellinx \cite{danos93wll,danos97jsl}. Many
techniques that they developed, such as {\em inductive decorations,\/}
are used throughout our analysis.  Our work diverges from theirs in
the adaptation of Andreoli's system (LLF) as our main instrument of
construction.  The $LK^\eta_p$ system of \cite{danos97jsl} describes
focused proofs for classical logic.  Its connections to polarization
and focusing were further explored and extended by Laurent, Quatrini
and de Falco \cite{laurent05apal} using {\em polarized proof nets.\/}
It may be tempting to speculate that the best way to arrive at a
notion of intuitionistic focusing is by simple modifications to these
systems, such as restricting them to single-conclusion sequents.
Closer examination however, reveal intricate issues concerning this
approach.  For example, the notion of classical {\em polarity\/}
appears to be distinct from and {\em contrary\/} to
intuitionistic polarity, especially at the level of atoms (see
Sections \ref{polarity} and \ref{classical}).  Resolving this issue
would be central to finding systems that support combined forward and
backward chaining.  
Although the relationship between $LK^\eta_p$ and our systems is
interesting, we chose for this work to derive intuitionistic focusing
from focusing in linear logic as opposed to classical logic.

Much of the research into focusing systems has been motivated by their
application.  For example, the papers
\cite{jagadeesan05fsttcs,miller91apal,howe98phd,chaudhuri06phd} are
motivated by foundational issues in logic programming and automated
deduction.  The papers
\cite{herbelin95phd,danos93wll,danos97jsl,laurent05apal} are motivated
by foundational issues in functional programming and the
$\lambda$-calculus.  Also, Levy \cite{levy06icalp}
presents focus-style proof systems for
typing in the $\lambda$-calculus and Curien and Herbelin
\cite{curien00icfp} (among others) have noted the relationship 
between forward chaining and call-by-value
evaluation and between backward chaining and call-by-name evaluation.

Our work can be extended to second order logic, although this
paper is concerned mainly with first-order quantification.

Many details missing from this paper can be found in the
report \cite{liang06report}.

\section{Focusing in Linear Logic}
\label{llfocus}

We summarize the key results from \cite{andreoli92jlc} on focusing
proofs for linear logic.

A {\em literal} is either an atomic formula or the negation of an
atomic formula.  A linear logic formula is in {\em negation normal
form} if it does not contain occurrences of $\limp$ and if all
negations have atomic scope.  If $K$ is literal, then $\nng{K}$
denotes its complement: in particular, if $K$ is $\nng{A}$ then
$\nng{K}$ is $A$.  

Connectives in linear logic are either {\em
asynchronous\/} or {\em synchronous}.  The asynchronous connectives are
$\bot$, $\lpar$, $\quest$, $\top$, $\with$, and $\forall$ while the
synchronous connectives are their de Morgan dual, namely, $\one$,
$\ot$, $\bang$, $\zero$, $\oplus$, and $\exists$.
Asynchronous connectives are
those where the right-introduction rule is always invertible.
Formally, a formula in negation normal form is of three kinds:
literal, asynchronous ({\em i.e.}, its top-level connective is 
asynchronous), and synchronous ({\em i.e.}, its top-level
connective is synchronous). 

\begin{figure}[ht]
\centering
$$\frac{\Up\Psi\Delta L}{\Up\Psi\Delta{\bot,L}}\ [\bot]
  \qquad
  \frac{\Up{\Psi}{\Delta}{F,G,L}}{\Up{\Psi}{\Delta}{F\lpar G,L}}\ [\lpar]
  \qquad
  \frac{\Up{\Psi,F}{\Delta}{L}}{\Up{\Psi}{\Delta}{\quest F,L}}\ [\quest]
$$
$$\frac{}{\Up{\Psi}{\Delta}{\top,L}}\ [\top]
  \qquad
  \frac{\Up{\Psi}{\Delta}{F,L}\quad \Up{\Psi}{\Delta}{G,L}}
       {\Up{\Psi}{\Delta}{F\with G,L}}
     \ [\with]
  \qquad
  \frac{\Up{\Psi}{\Delta}{B[y/x],L}}{\Up{\Psi}{\Delta}{\forall x.B,L}}\ 
       [\forall]
$$
$$\frac{\Up{\Psi}{\Delta,F}{L}}{\Up{\Psi}{\Delta}{F,L}}
   \ [R\Uparrow]\hbox{\quad provided that F is not asynchronous}
$$
$$\frac{}{\Down{\Psi}{\cdot}{\one}}\ [\one]
  \qquad
  \frac{\Down{\Psi}{\Delta_1}{F}\quad \Down{\Psi}{\Delta_2}{G}}
       {\Down{\Psi}{\Delta_1,\Delta_2}{F\ot G}} \ [\ot]
  \qquad
  \frac{\Up{\Psi}{\cdot}{F}}{\Down{\Psi}{\cdot}{\bang F}}\ [\bang]
$$
$$\frac{\Down{\Psi}{\Delta}{F_1}}{\Down{\Psi}{\Delta}{F_1\oplus F_2}}
  \ [\oplus_l]
  \qquad
  \frac{\Down{\Psi}{\Delta}{F_2}}{\Down{\Psi}{\Delta}{F_1\oplus F_2}}
  \ [\oplus_r]
  \qquad
  \frac{\Down{\Psi}{\Delta}{B[t/x]}}{\Down{\Psi}{\Delta}{\exists x.B}}
              \ [\exists]
$$
$$\frac{\Up{\Psi}{\Delta}{F}}{\Down{\Psi}{\Delta}{F}}
   \ [R\Downarrow]\hbox{\quad provided that $F$ is either asynchronous
                              or a negative literal} 
$$
$$\hbox{If $K$ a positive literal:\quad } 
  \frac{}{\Down{\Psi}{\nng{K}}{K}}\ [I_1]
  \qquad
  \frac{}{\Down{\Psi,\nng{K}}{\cdot}{K}}\ [I_2]
$$
$$\hbox{If $F$ is not a negative literal:\quad } 
  \frac{\Down{\Psi}{\Delta}{F}}{\Up{\Psi}{\Delta,F}{\cdot}}  \ [D_1]
  \qquad
   \frac{\Down{\Psi,F}{\Delta}{F}}{\Up{\Psi,F}{\Delta}{\cdot}} \ [D_2]
$$
\caption{The focused proof system LLF for linear logic}
\label{focused}
\end{figure}

As mentioned in Section~\ref{about}, 
the classification of non-atomic
formulas as asynchronous or synchronous is pushed to
literals by assigning a fixed but arbitrary {\em bias} to atoms: an
atom given a {\em negative bias} is linked to asynchronous behavior
while an atom given {\em positive bias} is linked to synchronous behavior.
In Andreoli's original
presentation of LLF \cite{andreoli92jlc} all atoms were classified as
``positive'' and their negations ``negative.'' 
Girard made a similar assignment for LC \cite{girard91mscs}.  In a classical
setting, such a choice works fine since classical negation simply
flips bias.  In intuitionistic systems, however, a more natural
treatment is to assign an arbitrary bias directly to atoms.
This bias of atoms is extended to literals: negating a negative atom
yields a positive literal and negating a positive atom yields a
negative literal.

The focusing proof system LLF for linear logic, presented in
Figure~\ref{focused}, contains two kinds of sequents.  In the sequent
$\Up\Psi\Delta L$, the ``zones'' $\Psi$ and $\Delta$ are multisets and
$L$ is a list.  This sequent encodes the usual one-sided sequent
$\oneseq\quest\Psi,\Delta, L$ (here, we assume the natural coercion of
lists into multisets).  This sequent will also satisfy the
invariant that requires $\Delta$ to contain only literals and
synchronous formulas.  In the sequent $\Down{\Psi}{\Delta}{F}$, the
zone $\Psi$ is a multiset of formulas and $\Delta$ is a multiset of
literals and synchronous formulas, and $F$ is a single formula.
Notice that the bias of literals is explicitly
referred to in the $[R\Uparrow]$ and initial rules: in particular, in
the initial rules, the literal on the right of the $\Downarrow$ must
be positive.

\ignore{ 
\begin{figure}[t]
$$\infer[\lbrack D_1\rbrack]
        {\Up{\cdot}{\nng{a}, a\ot\nng{b}, b\ot\nng{c}, c}{\cdot}}{
  \infer[\lbrack\ot\rbrack]
        {\Down{\cdot}{\nng{a}, b\ot\nng{c}, c}{a\ot\nng{b}}}{
  \infer[\lbrack I_1\rbrack]{\Down{\cdot}{\nng{a}}{a}}{} &
  \infer[\lbrack R\Downarrow\rbrack]{\Down{\cdot}{b\ot\nng{c},c}{\nng{b}}}{
  \infer[\lbrack D_1\rbrack]{\Up{\cdot}{b\ot\nng{c}, c,\nng{b}}{\cdot}}{
  \infer[\lbrack\ot\rbrack]{\Down{\cdot}{c,\nng{b}}{b\ot\nng{c}}}{
  \infer[\lbrack I_1\rbrack]{\Down{\cdot}{\nng{b}}{b}}{} &
  \infer[\lbrack R\Downarrow\rbrack]{\Down{\cdot}{c}{\nng{c}}}{
  \infer[\lbrack R\Uparrow  \rbrack]{\Up{\cdot}{c}{\nng{c}}}{
  \infer[\lbrack D_1        \rbrack]{\Up{\cdot}{\nng{c},c}{}}{
  \infer[\lbrack I_1        \rbrack]{\Down{\cdot}{\nng{c}}{c}}{}
    }}}}}}}}
\qquad
  \infer[]{\twoseq{a, a \limp b, b\limp c}{c}}{
  \infer[]{\twoseq{a}{a}}{} &
  \infer[]{\twoseq{b, b\limp c}{c}}{
  \infer[]{\twoseq{b}{b}}{} &  \infer[]{\twoseq{c}{c}}{}}}
$$
\caption{A focused, one-sided sequent and its corresponding two-sided
  form.}
\label{focus ex}
\end{figure}

To illustrate the effect of atomic formula bias on the shape of
focused proofs, notice that the two-sided linear logic sequent
$\twoseq{a, a \limp b, b\limp c}{c}$ has exactly two different proofs
depending on the order in which the linear implications are
introduced.  On the other hand, the one-sided sequent
$\Up{\cdot}{\nng{a}, a\ot\nng{b}, b\ot\nng{c}, c}{\cdot}$ has exactly
one LLF proof for a given selection of bias assignments to atoms.  In
particular, if $a$, $b$, and $c$ are positive then the introduction of
$b\limp c$ must appear above the introduction of $a\limp b$.
Figure~\ref{focus ex} contains the focused proof of the one-sided
sequent and the corresponding proof of the associated two-sided
sequent for this bias assignment.  These proofs are typically referred
to as involving forward chaining.  If $a$, $b$, and $c$ are
negative then there is again one focused proof and it corresponds to
switching the order on introducing these implications: that proof is
referred to as involving backward chaining.
} 

Changes to the bias assigned to atoms does not affect provability of a
linear logic formula: instead it affects the structure of focused
proofs.  

\section{Translating Intuitionistic Logic}
\label{trans}

\ignore{
\begin{table}[t]
$$\begin{tabular}
        {|@{\ }c@{\ \strut}||@{\strut\ }c@{\ }|@{\ \strut}c@{\ }|}
\hline
  $B$ & $B^1$ & $B^0$ \\
\hline\hline
atom $Q$           & $Q$ 
              & $Q$ \\
\hline
$\true$        & 1  
              & $\top$ \\
\hline
$\false$       & $0$ 
              & $0$\\
\hline
$P \wedge Q$  & $\bang (P^1 \& Q^1)$ 
              & $\bang P^0\&\bang
Q^0=(\quest\uob{P}\oplus\quest\uob{Q})^\bot$\\
\hline
$P \vee Q$    & $\bang P^1 \oplus \bang Q^1$ 
              & $\bang P^0 \oplus \bang Q^0 = (\quest \uob{P} \& \quest \uob{Q})^\bot$\\
\hline
$P\supset Q$  & $\bang (\bang P^0 \limp Q^1) = \bang (\quest \uob{P} \lpar Q^1) $ 
              & $\bang P^1 \limp \bang Q^0 = (\bang P^1 \otimes \qqob)^\bot$\\
\hline
$\neg P$      & $\bang (\bang P^0\limp 0) = \bang (0\lpar \quest \uob{P}) $ 
              & $\bang P^1\limp 0 = (\bang P^1 \otimes \top)^\bot$\\
\hline
$\exists x P$ & $\exists x \bang P^1$ 
              & $\exists x \bang P^0 = (\forall x\qob{P})^\bot$\\
\hline
$\forall x P$ & $\bang \forall x P^1$ 
              & $\forall x \bang  P^0 = (\exists x\qob{P})^\bot$\\
\hline
 \end{tabular}
$$
\caption{The 0/1 translation used to encode LJ proofs into linear
  logic.}
\label{01trans}
\end{table}
}
\begin{table}[t]
$$\begin{tabular}
        {|@{\ }c@{\ \copy\tallstrutbox}||@{\ }c@{\ }|@{\ }c@{\ }|@{\ }c@{\ }|}
\hline
  $B$ & $B^1$ & $B^0$ & $(B^0)^\perp$ \\
\hline\hline
atom $Q$      & $Q$ & $Q$ & $Q^\perp$ \\
\hline
$\true$       & 1 & $\top$ & $0$ \\
\hline
$\false$      & $0$ & $0$  & $\top$ \\
\hline
$P \wedge Q$  & $\bang (P^1 \& Q^1)$ 
              & $\bang P^0\&\bang Q^0$
              & $\quest\uob{P}\oplus\quest\uob{Q}$\\
\hline
$P \vee Q$    & $\bang P^1 \oplus \bang Q^1$ 
              & $\bang P^0 \oplus \bang Q^0$ 
              & $\quest \uob{P} \& \quest \uob{Q}$\\
\hline
$P\supset Q$  & $\bang (\quest \uob{P} \lpar Q^1)$ 
              & $\bang P^1 \limp \bang Q^0$
              & $\bang P^1 \otimes \quest (Q^0)^\bot$\\
\hline
$\neg P$      & $\bang (0\lpar \quest \uob{P}) $ 
              & $\bang P^1\limp 0$ 
              & $\bang P^1 \otimes \top$\\
\hline
$\exists x P$ & $\exists x \bang P^1$ 
              & $\exists x \bang P^0$
              & $\forall x\qob{P}$\\
\hline
$\forall x P$ & $\bang \forall x P^1$ 
              & $\forall x \bang P^0$
              & $\exists x\qob{P}$\\
\hline
 \end{tabular}
$$
\caption{The 0/1 translation used to encode LJ proofs into linear
  logic.}
\label{01trans}
\end{table}

Table \ref{01trans} contains a translation of intuitionistic logic
into linear logic.  This translation induces a bijection between
arbitrary LJ proofs and LLF proofs of the translated image in the
following sense.  First notice that this translation is {\em
asymmetric}: the intuitionistic formula $A$ is translated using $A^1$
if it occurs on the right-side of an LJ sequent and as $A^0$ if it
occurs on the left-side.  Since this translation is used
to capture cut-free proofs, such distinctions are not problematic.
Since the left-hand side of a sequent in LJ will be negated when
translated to a one-sided linear logic sequent, $(B^0)^\bot$ is
also shown.
The following Proposition essentially says that, via this translation,
linear logic focusing can capture arbitrary proofs in LJ.
Here, $\vdashi$ denotes an intuitionistic logic sequent.
\begin{proposition}
Let $\uob{\Gamma}$ be the multiset $\{(D^0)^\perp\mid D\in\Gamma\}$.
The focused proofs of $\fgob:\Uparrow R^1$ are in
bijective correspondence with the LJ proofs of $\Gamma\vdashi R$.
\end{proposition}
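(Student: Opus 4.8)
The plan is to establish the bijection directly, by exhibiting two mutually inverse proof-to-proof translations between LJ derivations of $\Gamma\vdashi R$ and LLF derivations of $\Up{\gob}{\cdot}{R^1}$, rather than composing faithfulness of the $0/1$ encoding with Andreoli's completeness theorem (which would only recover equiprovability, not a bijection on proofs). The organizing device is a \emph{phase invariant} on the sequents that can occur in such an LLF derivation: at every decide point the sequent has the form $\Up{\uob{\Gamma'}}{\Delta'}{\cdot}$, where $\uob{\Gamma'}=\{\dob\mid D\in\Gamma'\}$ encodes some intuitionistic context $\Gamma'$ and $\Delta'$ is empty or the singleton $\{S^1\}$ holding the encoding of the current succedent $S$, and every formula ever placed under focus is an encoding $B^1$ or $\dob$. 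I would first check that this invariant is preserved by the rules, noting that the only formulas entering $\Psi$ are the left-encodings $\dob$ (via $[\quest]$) and that $R^1$ reaches $\Delta$ only through $[R\Uparrow]$.

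For the forward direction I would induct on the LJ derivation, replacing each inference by a fixed \emph{block} of LLF rules delimited by a decide. Each right rule is realized by $[R\Uparrow]$ (moving the synchronous $R^1$ into $\Delta$) followed by $[D_1]$ and then the forced steps: $[\bang],[\lpar],[\quest]$ for $\supset R$ (landing at the encoding of $\Gamma',P\vdashi Q$), $[\oplus_i],[\bang]$ for $\vee R_i$, $[\bang],[\with]$ for the two-premise $\wedge R$, $[\bang],[\forall]$ for $\forall R$, and $[\exists],[\bang]$ for $\exists R$. Each left rule is a $[D_2]$ on a context formula $\dob$ followed by forced steps: for $\supset L$ the rule is $[\ot]$, where the $[\bang]$ on the $\bang P^1$ factor forces that factor's share of $\Delta$ to be empty and thereby routes $R^1$ into the minor premise, completed by $[R\Downarrow],[\quest]$ on the other factor; for $\vee L$ it is $[R\Downarrow],[\with],[\quest]$. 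The axiom becomes $[R\Uparrow],[D_1],[I_2]$, closing against the negated left-atom in $\Psi$. Verifying that each block is a legal derivation step preserving the invariant is routine case analysis.

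The backward direction is the crux. Here I would show that every LLF proof of $\Up{\gob}{\cdot}{R^1}$ decomposes \emph{uniquely} into exactly these blocks, so that reading each block as its LJ rule is well defined. The key lemma is that focusing removes all bureaucratic freedom between two consecutive decides: the asynchronous phase is forced because its rules are invertible and, crucially, the $\Uparrow$ zone is an \emph{ordered list}, which fixes the order of decomposition and is exactly what upgrades a mere correspondence to a genuine bijection; and each focused phase introduces exactly one intuitionistic connective before meeting a $\bang$ or an asynchronous subformula, because every compound encoding $B^1$ and $\dob$ exposes a single connective capped by such a barrier. Consequently the only genuine choices in the LLF proof are which formula to decide on and the additive and witness selections ($[\oplus_l]$ versus $[\oplus_r]$, the terms for $[\exists]$, the eigenvariables for $[\forall]$), and these correspond precisely to which LJ rule is applied next together with its $\vee R_i$, $\wedge L_i$, $\exists R$, and $\forall L$ data. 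The main obstacle is precisely this forcing lemma: I must verify that the $\bang$ placements split the multiplicative context deterministically in the $[\ot]$ cases (so that no focused proof lies outside the image of the forward map) and that the phase invariant is maintained across every block so that the induction closes.

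Finally, the two maps are mutually inverse by construction once the block decomposition is unique, and I would record three points on which the bijection quietly depends. Left contraction in LJ is absorbed by the reusability of the $\Psi$ zone (a $[D_2]$ decide does not consume its formula) and weakening by the fact that unused context is simply ignored at $[I_2]$; the precise cut-free LJ presentation (for instance a projective $\wedge L$ matching $\quest\uob{P}\oplus\quest\uob{Q}$) must be fixed so that the rule-to-block matching is exact; and the atoms must carry positive bias, so that $R^1$ for an atomic succedent is a positive literal that $[R\Uparrow]$ can move to $\Delta$ and $[I_2]$ can close, whereas a negative succedent atom would stall in the asynchronous zone with no applicable rule. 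This last point shows the bijection is genuinely sensitive to the bias convention, and I would state it explicitly as a hypothesis of the proof.
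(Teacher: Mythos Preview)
Your block-by-block decomposition is exactly the approach the paper takes (and defers in full to the companion report): its single illustrated case, mapping the LLF block $[\oplus],[R\Downarrow],[\quest]$ to the LJ rule $\wedge L$, is one instance of your forward/backward correspondence, and your forcing argument---the liberal $\bang$'s cap each focused phase after a single intuitionistic connective, while the ordered $\Uparrow$ list pins down the asynchronous phase---is precisely the intended mechanism.

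Your final caveat, however, is wrong and would needlessly weaken the proposition. The bijection is meant to hold for an \emph{arbitrary} bias assignment; that is part of the point of the $0/1$ translation. A negative succedent atom does not stall in the asynchronous zone: in this presentation of LLF, literals are a third syntactic class separate from the asynchronous formulas, so the side condition on $[R\Uparrow]$ (``$F$ not asynchronous'') lets a negative literal $R$ pass into $\Delta$ just as it would a positive one. What changes is only the shape of the axiom block. With $R$ positive you decide on $R\in\Delta$ via $[D_1]$ and close by $[I_2]$ against $\nng{R}\in\Psi$; with $R$ negative, $[D_1]$ is indeed blocked by its proviso, but now $\nng{R}\in\Psi$ is a \emph{positive} literal, so you decide on it via $[D_2]$ and close by $[I_1]$ against $R\in\Delta$. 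In either case the block is forced and unique, and the rest of your argument goes through unchanged. Drop the positive-bias hypothesis rather than record it.
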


For a detailed proof, see \cite{liang06report}.  We simply illustrate
one case in constructing the mapping from a collection of LLF rules to
an LJ inference rule. 
\[
\vcenter{
\infer[\lbrack\oplus\rbrack]{\fgob :R^1\Downarrow \qob{D_1}\oplus \qob{D_2}}
{\infer[\lbrack R\Downarrow\rbrack]{\fgob :R^1\Downarrow \qob{D_i}}
 {\infer[\lbrack?\rbrack]{\fgob :R^1\Uparrow \qob{D_i}}
  {\fgob, \uob{D_i} : R^1 \Uparrow}}}}
\qquad  \longmapsto \qquad 
\vcenter{\infer[\lbrack\wedge L\rbrack]{\Gamma, D_1\wedge D_2\vdashi R}
{\Gamma, D_i \vdashi R}}
\]
The liberal use of $\bang$ in this translation {\em throttles\/} focusing.
This translation is reminiscent of the earliest embedding
of classical into intuitionistic logic of Kolmogorov, which uses the
double negation in a similarly liberal fashion.  

The 0/1 translation can be used as a starting point in establishing
the completeness of other proof systems.  These systems can be seen as
{\em induced\/} from alternative translations of intuitionistic logic.

\begin{wrapfigure}{r}{55mm}
\begin{tabular}{|c||c|c|@{\copy\tallstrutbox}}
\hline
$F$ & $F^q$ (right) & $F^j$ (left) \\
\hline\hline
atom $C$ & $C$ & $C$  \\
\hline
$\false$ & $0$ & $0$ \\
\hline
$A\wedge B$ & $A^q\otimes B^q$ & $\bang A^j\otimes \bang B^j$ \\
\hline
$A\vee B$ & $A^q\oplus B^q$ & $\bang A^j\oplus \bang B^j$\\
\hline
$A\supset B$ & $(\bang A^j \lolli B^q)\otimes 1$ & \ $A^q\lolli \bang B^j$\ \\
\hline
\end{tabular} 
\end{wrapfigure}
Consider, for example, the \LJQp\ proof system presented in
\cite{dyckhoff06cie}.  The translation for this system is given here
using the ``$q/j$'' mapping.  The ``$\otimes 1$'' device is another
way to control focusing, or the lack thereof.  {\em All atoms must be
given positive bias\/} for this translation.

With minor changes, Girard's original (non-polarized) translation of
intuitionistic logic \cite{girard87tcs} induces the complement to
\LJQp\ called {\em LJT\/} \cite{herbelin95phd}, which is itself
derived from LKT \cite{danos93wll} (where this connection was noted
implicitly.)  {\em All atoms must be given negative bias\/} for this
translation.

Given a translation such as that of \LJQp, one can give a completeness
proof for the system using a {\em ``grand tour''\/} through linear logic
as follows:
\begin{enumerate}
\item Show that a proof under the 0/1 translation can be converted
into a proof under the new translation.  This usually follows 
from cut-elimination.  

\item Define a mapping between proofs in the new system (such as LJQ) 
and LLF proofs of its translation.

\item Show soundness of the new system with respect to LJ.  This is usually
trivial.  The ``tour'' is now complete, since proofs in LJ map to
proofs under the 0/1 translation.
\end{enumerate}

An intuitionistic system that contains atoms of both positive and
negative bias is $\lambda$RCC \cite{jagadeesan05fsttcs}.  Two special
cases of the $\supset\!L$ rule are distinguished involving $E\supset
D$ for positive atom $E$ and $G\supset A$ for negative atom $A$. Each
rule requires that the complementary atom ($E$ on the left, $A$ on the
right) is present when applied, thus terminating one branch of the
proof.  One can translate these special cases using forms $E\limp
\bang D'$ and $\bang G' \limp A$, respectively, in linear logic.  The
strategy outlined above can then be used to not only prove its
completeness but also extend it with more aggressive focusing
features.

Our interest here is not the construction of individual systems but
the building of a unifying framework for focusing in intuitionistic logic.
Such a task requires a closer examination of {\em polarity\/} and its
connection to focusing.

\section{Permeable Formulas and their Polarity}\label{polarity}

Focused proofs in linear logic are characterized by two different
phases: the invertible (asynchronous) phase and the non-invertible
(synchronous) phase.  These two phases are characterized by
introduction rules for dual sets of formulas.  In order to construct a
general focusing scheme for intuitionistic logic, the non-linear
(exponential) aspects of proofs need special attention, especially in
light of the fact that the $[\bang]$ rule stops a bottom-up
construction of focused application of synchronous rules (the arrow
$\Downarrow$ in the conclusion flips to $\Uparrow$ in the premise).

For our purposes here, a particularly flexible way to deal with the
exponentials in the translations of intuitionistic formulas is via the
notion of {\em permeation} that is used in LU
\cite{girard93apal}. In particular, there are essentially three grades
of {\em permeation}.  The formula $B$ is {\em left-permeable} if
$B\equiv \bang B$, is {\em right-permeable} if $B\equiv \quest B$, and
{\em neutral} otherwise.  Within sequent calculus proofs, a formula is
left-permeable if it admits structural rules on the left and
right-permeable if it admits structural rules on the right.  An
example of a left-permeable formula is $\exists x \bang\!A$.
All left-permeable formulas are synchronous and all right-permeables
asynchronous.  In the LU system, both the
left and right sides of sequents contain two zones --- one that treats
formulas linearly and one that permits structural rules.  A
left-permeable (resp., right-permeable) formula is allowed to move
between both zones on the left (right).
\ignore{  
**** It is important to explicate the precise relationship between the
LU polarization scheme and the synchronous/asynchronous duality as
defined by Andreoli.  On the surface, there are several discrepancies
between these notions.  There are three polarities in LU.  {\em
  Positives\/} are those formulas that admit structural rules on the
left: in linear logic formulas $A$ such that $A\equiv \bang A$.
Dually, {\em negatives\/} are those such that $A\equiv \quest A$.
{\em Neutrals\/} are those that do not admit structural rules
anywhere. LU uses a dyadic system similar to LLF: explicit
exponentials are replaced by separate linear and unbounded contexts or
zones.  Movement between zones is called {\em permeation.\/}
Dereliction is a permeation rule, but the positive and negative
formulas can also permeate in the other direction: left-permeable for
positives, right-permeable for negatives.  An example of a
left-permeable (positive) formula is $\exists x \bang A$.  All
left-permeable formulas are synchronous and all right-permeables
asynchronous (but not vice versa).  
} 
In addition, LU introduces 
atoms that are inherently left or right-permeable or neutral.
Although they appear to properly extend linear logic, one can
simulate LU in ``regular'' linear logic by translating left-permeable atoms
$A$ as $\bang A$ and right-permeable ones as $\quest A$.

To preserve the focusing characteristics of permeable atoms as
positively or negatively biased atoms, we use the following LU-inspired
asymmetrical translation.  The superscript $-\!1$ indicates
the left-side translation and $+\!1$ indicates the right-side
translation:
\begin{quote}
$P^\none=\bang P$ and $P^\pone = P$, for left-permeable (positive) atom $P$.\\
$N^\none=N$ and $N^\pone=\quest N$, for right-permeable (negative) atom $N$.\\
$B^\none=B^\pone = B$,  for neutral atom $B$.
\end{quote}
The $\bang$ rule of LLF causes a loss of focus in all circumstances,
and is the main reason why we use an asymmetrical translation.  The
translation of positive atoms above preserves {\em permeation on the
left\/} while allowing for {\em focus on the right.\/} That is,
left-permeable atoms can now be interpreted meaningfully as positively
biased atoms in focused proofs, and dually for right-permeable atoms.
Furthermore, the permeation of positive atoms is {\em ``one-way
only:''\/} they cannot be selected for focus again once they enter the
non-linear context.

Intuitionistic logic uses the left-permeable and neutral formulas and atoms.
LU defines a translation for intuitionistic logic so that {\em all
synchronous formulas are left-permeable.\/} For example, $\vee$ is
translated as follows (here, $P$, $Q$ are positive and $N$, $M$ are negative):
$(P\vee Q)^\none = P^\none \oplus Q^\none$, 
$(P\vee N)^\none = P^\none \oplus \bang N^\none$, 
$(N\vee P)^\none = \bang N^\none \oplus Q^\none$, and 
$(N\vee M)^\none = \bang N^\none \oplus \bang M^\none$.
The final element of intuitionistic polarity is that {\em
neutral atoms should be assigned negative bias in focused proofs.\/}
Neutral atoms that are introduced into the left context (e.g. by a
$\supset\!L$ rule) must immediately end that branch of the proof in an
identity rule.  Otherwise, the unique {\em stoup\/} is lost when
multiple non-permeable atoms accumulate in the linear context.

The LU and LLF systems serve as a convenient platform
for the unified characterization of polarity and focusing in all three
logics.  We can now understand the terminology of ``positive'' and
``negative'' formulas in each logic as follows:

\ignore{
\begin{enumerate}
\item Positive polarity in linear logic means synchronous.
\item Negative polarity in linear logic means asynchronous.
\item Positive polarity in intuitionistic and classical logic
means synchronous {\em and\/} left-permeable in linear logic.
\item Negative polarity in intuitionistic logic means asynchronous
{\em and\/} non-permeable in linear logic.
\item Negative polarity in classical logic means asynchronous {\em and\/}
right-permeable in linear logic.
\end{enumerate}
Furthermore, the notion of positive/negative in each logic can be
extended to atoms, with the consequence that some atoms must be made
permeable. 
}

\begin{description}
\item[Linear logic:] {\em Positive\/} formulas are synchronous
formulas and positively biased neutral atoms.  {\em Negative\/}
formulas are asynchronous formulas and negatively biased neutral 
atoms. 

\item[Intuitionistic logic:] {\em Positive\/} formulas
are left-permeable formulas and
{\em negative\/} formulas are asynchronous
neutral formulas and negatively biased neutral atoms.

\item[Classical logic:] {\em Positive\/} formulas are
left-permeable formulas.
{\em Negative\/} formulas are right-permeable formulas.
\end{description}

\section{The \ljf\ Sequent Calculus}
\label{ljf}

Since the polarities of intuitionistic logic observe stronger
invariances, intuitionistic focused proofs are more
well-structured than LLF proofs.  The non-linear context of LLF
contains both synchronous and asynchronous formulas, whereas in
intuitionistic logic sequents can be clearly divided into zones
respecting polarity. That is, when translating an intuitionistic
sequent into a LLF sequent, synchronous formulas on the left are
placed in the linear context.  

We also make an adjustment on
the LU translation of intuitionistic logic.  Instead of using $\&$ or
$\otimes$ depending on the polarities of the subformulas, we 
construct two versions of intuitionistic conjunction, which
has the following meaning in linear logic ($P$, $Q$ for positives,
$N$, $M$ for negatives, $A$, $B$ arbitrary):
\begin{quote}
\begin{tabular}{l@{\qquad}l}
$(P\wedgep Q)^\none = P^\none \otimes Q^\none$ &
$(A \wedgep B)^\pone  =  A^\pone \otimes B^\pone$\\
$(P\wedgep N)^\none  =  P^\none \otimes \bang N^\none$ & $~$\\
$(N\wedgep P)^\none  =  \bang N^\none \otimes P^\none$ &
$(A \wedgen B)^\none  =  A^\none \& B^\none$\\
$(N\wedgep M)^\none  =  \bang N^\none \otimes \bang M^\none$ &
$(A \wedgen B)^\pone  =  A^\pone \& B^\pone$
\end{tabular}
\end{quote}
The connectives $\wedgen$ and $\wedgep$ are equivalent in
intuitionistic logic in terms 
of provability but differ in their impact on the structure of focused proofs.
The use of two conjunctions means that 
the top-level structure of formulas completely determines their polarity.  
{\em Polarity\/} in intuitionistic logic is defined as follows.
\begin{definition}\label{ljfpolarity}
Atoms in LJF are arbitrarily positive or negative.  {\em Positive
formulas\/} are among positive atoms, $\true$, $\false$,
$A \wedgep B$, $A\vee B$ and $\exists x A$.  
{\em Negative formulas\/} are 
among negative atoms, $A \wedgen B$, $A \supset B$ and $\forall x A$.
\end{definition}

The above translation induces the
sequent calculus {\em LJF\/} for intuitionistic logic, shown in Figure
\ref{ljf2}.  Sequents in \ljf\ can be interpreted as follows:
\begin{enumerate}
\item $[\Gamma],\Theta\lra {\cal R}$ \ (end sequent): \ 
this is an {\em unfocused sequent.\/}  $\Gamma$ contains
negative formulas and positive atoms.  
$\cal R$ represents either a formula $R$ or $[R]$.
\item $[\Gamma]\lra [R]$: \ this represents a sequent in which all asynchronous
formulas have been decomposed, and is ready for a formula 
to be selected for focus.
\item $[\Gamma]\leftfcs{A} [R]$: \ this is a {\em left-focusing\/} sequent,
with focus on formula $A$.  The meaning of this sequent remains
$\Gamma,A\vdashi R$.
\item $[\Gamma]\rightfcs{A}$: \  this is a {\em right-focusing\/} sequent on
formula $A$, with the meaning $\Gamma\vdashi A$.
\end{enumerate} 
\ignore{ 
The following is a sample correspondence between a LLF rule and a \ljf\ rule:
\[
\vcenter{\infer[\!\invamp]{\fgnb :\Uparrow R^\pone, \unb{\Theta},\unb{A}\lpar \unb{B}}
{\fgnb :\Uparrow R^\pone, \unb{\Theta},\unb{A}, \unb{B}}}
~~\longmapsto~~
\vcenter{\infer[\wedgep L]{[\Gamma],\Theta,A\wedgep B\lra R}
  {[\Gamma],\Theta,A, B\lra R}}
\]
} 

\begin{figure}[t]
\[
\infer[Lf]{[N,\Gamma]\lra [R]}{[N,\Gamma]\leftfcs{N} [R]}
\qquad
\infer[Rf]{[\Gamma]\lra [P]}{[\Gamma]\rightfcs{P}}
\qquad
\infer[R_l]{[\Gamma]\leftfcs{P} [R]}{[\Gamma],P\lra [R]}
\qquad
\infer[R_r]{[\Gamma]\rightfcs{N}}{[\Gamma]\lra N}
\]
\[
\infer[\mbox{[]}_l]{[\Gamma],\Theta, C\lra {\cal R}}{[C,\Gamma],\Theta\lra {\cal R}}
\qquad\qquad
\infer[\mbox{[]}_r]{[\Gamma],\Theta \lra D}{[\Gamma],\Theta\lra [D]}
\]
\[
\infer[I_r, ~\mbox{atomic $P$}]{[P,\Gamma]\rightfcs{P}}{}
\qquad\qquad
\infer[I_l, ~\mbox{atomic $N$}]{[\Gamma]\leftfcs{N} [N]}{}
\]
\[
\infer[falseL]{[\Gamma],\Theta, false\lra {\cal R}}{}
\qquad
\infer[trueL]{[\Gamma],\Theta,true\lra {\cal R}}{[\Gamma],\Theta\lra {\cal R}}
\qquad
\infer[trueR]{[\Gamma] \rightfcs{true}}{}
\]
\[
\infer[\wedgen L]{[\Gamma]\leftfcs{A_1 \wedgen A_2}  [R]}{[\Gamma]\leftfcs{A_i} [R]}
\qquad\qquad
\infer[\wedgep L]{[\Gamma],\Theta,A ~\wedgep B\lra {\cal R}}
{[\Gamma],\Theta,A,B\lra {\cal R}}
\]
\[
\infer[\wedgen R]{[\Gamma],\Theta\lra A ~\wedgen B}
{[\Gamma],\Theta\lra A & [\Gamma],\Theta\lra B}
\qquad\qquad
\infer[\wedgep R]{[\Gamma]\rightfcs{A ~\wedgep B}}
{[\Gamma]\rightfcs{A} & [\Gamma]\rightfcs{B}}
\]
\[
\infer[\vee L]{[\Gamma],\Theta, A\vee B\lra {\cal R}}
{[\Gamma],\Theta,A\lra {\cal R} & [\Gamma],\Theta,B\lra {\cal R}}
\qquad~~
\infer[\vee R]{[\Gamma]\rightfcs{A_1\vee A_2}}{[\Gamma]\rightfcs{A_i}}
\]
\[
\infer[\supset L]{[\Gamma]\leftfcs{A\supset B} [R]}
 {[\Gamma]\rightfcs{A}~ & ~[\Gamma]\leftfcs{B} [R]}
\qquad\qquad
\infer[\supset R]{[\Gamma],\Theta\lra A\supset B}{[\Gamma],\Theta,A\lra B}
\]
\[
\infer[\exists L]{[\Gamma],\Theta,\exists y A\lra {\cal R}}
                 {[\Gamma],\Theta,A\lra {\cal R}}
\quad\
\infer[\exists R]{[\Gamma]\rightfcs{\exists x A}}{[\Gamma]\rightfcs{A[t/x]}}
\qquad
\infer[\forall L]{[\Gamma]\leftfcs{\forall x A} [R]}{[\Gamma]\leftfcs{A[t/x]} [R]}
\quad\
\infer[\forall R]{[\Gamma],\Theta\lra \forall y A}{[\Gamma],\Theta\lra A}
\]
\caption{The Intuitionistic Sequent Calculus LJF.  Here, $P$ is
  positive, $N$ is negative, $C$ is a negative formula or
  positive atom,  and $D$ a positive formula or negative atom.  Other
  formulas are arbitrary.  Also, $y$ is not free in $\Gamma$,
  $\Theta$, or {\cal R}. 
}
\label{ljf2}
\end{figure}

\begin{theorem} LJF is sound and complete with respect to intuitionistic logic.
\end{theorem}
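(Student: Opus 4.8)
The plan is to run the ``grand tour'' through linear logic sketched in Section~\ref{trans}, using two results as fixed anchors: the Proposition above (a bijection between LJ proofs and LLF focused proofs of the $0/1$ translation) and the completeness of focusing for LLF. Fix any \ljf-polarization of the intuitionistic sequent $\Gamma \vdashi R$: read each $\wedge$ as $\wedgep$ or $\wedgen$, assign a bias to every atom (neutral atoms negative), and apply the $\none/\pone$ maps of Section~\ref{polarity} together with the $\wedgep/\wedgen$ table. This produces a linear image whose left formulas are the $(\cdot^\none)^\bot$ and whose right formula is $(\cdot^\pone)$; write $(\cdot)'$ for it. The theorem then splits into soundness ($\ljf$ proofs yield LJ proofs) and completeness (LJ proofs yield $\ljf$ proofs), both understood after collapsing $\wedgep$ and $\wedgen$ to $\wedge$.

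Soundness is the routine direction, obtained by induction on the height of an \ljf\ derivation under the intuitionistic reading supplied in Section~\ref{ljf}: the four sequent forms $[\Gamma],\Theta\lra{\cal R}$, $[\Gamma]\lra[R]$, $[\Gamma]\leftfcs{A}[R]$ and $[\Gamma]\rightfcs{A}$ denote $\Gamma,\Theta\vdashi R$, $\Gamma,A\vdashi R$ and $\Gamma\vdashi A$ respectively. Erasing focusing marks and zone brackets, the structural rules $Lf,Rf,R_l,R_r,[]_l,[]_r$ become identities on this reading (modulo the contraction built into $Lf$), while the logical rules become ordinary LJ introduction and identity rules, so one reads off an LJ proof of $\Gamma\vdashi R$.

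For completeness I would proceed in three moves. (i) From an LJ proof of $\Gamma\vdashi R$, the Proposition delivers an LLF focused proof, hence a linear-logic proof, of the $0/1$-image $\fgob:\Uparrow R^1$. (ii) I would verify, connective by connective, that the $0/1$ translation and the \ljf-translation of a formula are interprovable in linear logic --- $(B^0)^\bot$ with the \ljf\ left image and $B^1$ with the \ljf\ right image --- so that, through cut-elimination, linear provability of the $0/1$-image transfers to linear provability of $(\cdot)'$; this is exactly step~1 of the recipe of Section~\ref{trans}. Since changing atomic bias never alters linear provability, I may now fix the \ljf\ bias and invoke focusing completeness to obtain an \emph{LLF focused} proof of $(\cdot)'$. (iii) Finally I would convert this focused proof back into an \ljf\ proof of $\Gamma\vdashi R$.

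The crux --- and the step I expect to be hardest --- is move (iii): showing that every LLF focused proof of $(\cdot)'$ can be read off as an \ljf\ derivation (existence is all that is needed here, though the map can be made a bijection). I would define a block-by-block simulation in which each \ljf\ rule corresponds to a fixed contiguous group of LLF rules --- the $[\oplus]$-to-$\wedge L$ correspondence displayed after the Proposition is the template --- and argue conversely that the phase discipline of LLF forces exactly this grouping. The delicate verifications are that the liberal $\bang$'s throttle focus precisely where an \ljf\ phase must end, so that the decide rules $D_1,D_2$ and the release rules $R\!\Uparrow,R\!\Downarrow$ align with $Lf,Rf,R_l,R_r$; that neutral atoms carrying negative bias are forced to close their branch by an identity the instant they enter the linear context, matching $I_l$ and the invariant that $\Gamma$ holds only negative formulas and positive atoms; that the one-way left-permeation of positive atoms reproduces the zone structure $[\Gamma]$; and that $\wedgep$ and $\wedgen$ are separated exactly by whether their image is built from $\otimes$ or $\with$, so that the two conjunction rules are each simulated unambiguously. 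With this correspondence in place, the three moves of completeness compose and the tour is closed.
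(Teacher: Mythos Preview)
Your proposal is correct and follows essentially the same ``grand tour'' that the paper invokes: soundness by erasing focusing annotations (step~3 of the tour), then completeness by passing from LJ to the $0/1$ image (the Proposition), transferring across to the \ljf-induced translation via cut-elimination in linear logic (step~1), and reading LLF focused proofs of that translation back as \ljf\ derivations through a block-by-block correspondence (step~2). The paper gives only the one-line pointer to this strategy and defers the details to the technical report; the verifications you single out as delicate --- the $\bang$'s stopping focus exactly at \ljf\ phase boundaries, negative neutral atoms forcing immediate identity, one-way permeation of positive atoms reproducing the $[\Gamma]$ zone, and the $\otimes$/$\with$ split separating $\wedgep$ from $\wedgen$ --- are precisely the ones that need to be checked there.
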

\begin{proof}
Using the ``grand tour'' strategy.  See \cite[Section
  6]{liang06report} for details. 
\end{proof}

Given the different forms of sequents, the cut rule for \ljf\ takes
many forms:
\[
\infer[Cut^+]{[\Gamma\Gamma' ],\Theta\Theta' \lra \cal R}
{[\Gamma],\Theta \lra P & [\Gamma'],\Theta',P\lra \cal R}
\quad
\infer[Cut^-]{[\Gamma\Gamma' ],\Theta\Theta' \lra \cal R}
{[\Gamma],\Theta \lra C & [C, \Gamma'],\Theta' \lra \cal R}
\]
\[
\infer[Cut_1^\leftarrow]{[\Gamma\Gamma']\leftfcs{B} [R]}
{[\Gamma]\leftfcs{B} [P] & [\Gamma'],P \lra [R]}
\quad
\infer[Cut_2^\leftarrow]{[\Gamma\Gamma']\leftfcs{B} [R]}
{[\Gamma]\lra N & [N,\Gamma']\leftfcs{B} [R]}
\]
\[
\infer[Cut^\ra]{[\Gamma\Gamma']\rightfcs{R} }
{[\Gamma]\rightfcs{C} & [C,\Gamma']\rightfcs{R}}
\]
Notice that the last three cut rules retain focus in the conclusion.
These rules extend those of {\em LJQ'\/} \cite{dyckhoff06cie}, which
were shown to be useful for studying term-reduction systems.
See \cite{liang06report} for a proof of the admissibility of these
rules. 

Like LLF, a key characteristic of \ljf\ is the assignment of arbitrary
polarity to atoms.  To illustrate the effect of these assignments on
the structure of focused proofs, consider the sequent $a, a\supset b,
b\supset c \vdash c$ where $a$, $b$ and $c$ are atoms.  This sequent
can be proved either by {\em forward chaining\/} through the clause
$a\supset b$, or {\em backward chaining\/} through the clause
$b\supset c$.  Assume that atoms $a$ and $b$ are assigned positive
polarity and that $c$ is assigned negative polarity.  This assignment
effectively adopts the forward chaining strategy, reflected in the following \ljf\
proof segment (here, $\Gamma$ is the set $\{a,a\supset b, b\supset c\}$):
\[
\infer[\supset\!L]{[\Gamma]\leftfcs{a\supset b} [c]}
{ \infer[I_r]{[\Gamma]\rightfcs{a}}{}
  &
  \infer[R_l]{[\Gamma]\leftfcs{b} [c]}
  {\infer[\boxl]{[\Gamma], b\lra [c]}
   {\infer[Lf]{[b,\Gamma]\lra [c]}
    {\infer[\supset\!L]{[b,\Gamma]\leftfcs{b\supset c} [c]}
    {\infer[I_r\quad]{[b,\Gamma]\rightfcs{b}}{} 
     &
     \infer[I_l]{[b,\Gamma]\leftfcs{c} [c]}{}}}}}}
\]
The polarities of $a$ and $c$ do not fundamentally affect the structure
of the proof in this example.  However,
assigning negative polarity to atom $b$ would restrict the proof to
use the backward chaining strategy:
\[
\infer[\supset\!L]{[\Gamma]\leftfcs{b\supset c}[c]}
{\infer[R_r]{[\Gamma]\rightfcs{b}}
 {\infer[\boxr]{[\Gamma]\lra b}
  {\infer[Lf]{[\Gamma]\lra [b]}
   {\infer[\supset\!L]{[\Gamma]\leftfcs{a\supset b} [b]}
    {\infer[I_r\quad]{[\Gamma]\rightfcs{a}}{}
     &
     \infer[I_l]{[\Gamma]\leftfcs{b} [b]}{}}}}}
 & \infer[I_l]{[\Gamma]\leftfcs{c} [c]}{}}
\]

\section{Embedding Intuitionistic Systems in LJF}
\label{framework}

The \ljf\ proof system can be used to ``host'' other focusing proof
system for intuitionistic logic.  One obvious restriction to \ljf\ is
its purely negative fragment, which essentially corresponds to LJT.  In the
negative fragment one also finds {\em uniform proofs,\/} where the
right ``goal'' formula is always fully decomposed before any left rule
is applied.  Various other proof systems can be embedded into \ljf\ by
mapping intuitionistic formulas to intuitionistic formulas
in such a way that focusing features in \ljf\ are stopped by the
insertion of {\em delay\/} operators.  In particular, if we define
$\mkneg{B}= true\supset B$ and $\mkpos{B}=true\wedgep B$, then $B$,
$\mkneg{B}$, and $\mkpos{B}$ are all logically equivalent but
$\mkneg{B}$ is always negative and $\mkpos{B}$ is always positive.

Proofs in the LJQ' system can be embedded into \ljf\ by translating
all left-side formulas ($l$) as negatives and all right-side formulas
($r$) as positives: in particular, for atom B, $B^l = B^r = B$,
$false^l = \mkneg{false}$ , \  $false^r = false$,
$(A\wedge B)^l = \mkneg{A^l\wedgep B^l}$,  
      $(A\wedge B)^r = A^r\wedgep B^r$,
$(A\vee B)^l = \mkneg{A^l\vee B^l}$,   
      $(A\vee B)^r = A^r \vee B^r$,
$(A\supset B)^l = A^r\supset \mkpos{B^l}$, 
      $(A\supset B)^r = \mkpos{A^l \supset B^r}$.

\begin{wrapfigure}{r}{70mm}
\vskip -18pt
\begin{tabular}{@{\vrule height9.8pt depth3.5pt width0pt}|c||c|c|}
\hline
$F$ & $F^l$ (left) & \ \ \ $F^r$ (right) \\
\hline
\hline
atom $C$ & $C$ & $C$\\
\hline
$\false$ & $\mkneg{\false}$ & $\false$\\
\hline
$\true$ & $\mkneg{\true}$ & $\true$\\
\hline
$A\wedge B$ & $\mkpos{A^l}\wedgen \mkpos{B^l}$ & $\mkpos{A^r\wedgen B^r}$\\
\hline
$A\vee B$ & $\mkneg{A^l \vee B^l}$ & $\mkneg{A^r} \vee \mkneg{B^r}$\\
\hline
$A\supset B$ & $\mkneg{A^r} \supset \mkpos{B^l}$ & $\mkpos{A^l\supset B^r}$\\
\hline
$\exists x A$ & $\mkneg{\exists x A^l}$ & $\exists x \mkneg{A^r}$\\
\hline
$\forall x A$ & $\forall x \mkpos{A^l}$ & $\mkpos{\forall x A^r}$\\
\hline
\end{tabular} 
\end{wrapfigure}
Arbitrary LJ proofs can be embedded within \ljf\ by inserting sufficient
delaying operators.  The table here provides the translation
(redefining the superscripts $l$ and $r$, for convenience).  Together
with cut-elimination, the embedding also suggests a
completeness proof for \ljf\ independently of linear logic.
The following example embeds the $\wedge R$ rule in \ljf:
\[
\infer[Rf]{[\Gamma]\lra [\mkpos{A^r\wedgen B^r}]}
{\infer[\wedgep R]{[\Gamma]\rightfcs{(A^r\wedgen B^r)\wedgep \true}}
{\infer[R_r]{[\Gamma]\rightfcs{A^r\wedgen B^r}}
  {\infer[\wedgen R]{[\Gamma]\lra A^r\wedgen B^r}
     { \infer[\boxr]{[\Gamma]\lra A^r}{[\Gamma]\lra [A^r]} &
       \infer[\boxr]{[\Gamma]\lra B^r}{[\Gamma]\lra [B^r]} }}
   & \infer[\true R]{[\Gamma]\rightfcs{\true}}{} }}
\]
The system $\lambda$RCC also presents interesting choices.  In
particular, it may not always be the best choice to focus
maximally.  Forward chaining may generate a new formula or ``clause''
that may need to be used multiple times.  In a $\supset\!L$ rule on
formulas $E\supset D$ where $E$ is a positive atom, one may not wish
to decompose the formula $D$ immediately.  This is accomplished in the
linear translation with a $\bang$.  It can also be accomplished by
using formulas $E\supset \mkpos{D}$ in case $D$ is negative, and
 $E\supset \mkpos{\mkneg{D}}$ in case $D$ is positive.  
Note that unlike the $l/r$ translations for LJQ and LJ above, these simple
devices do not hereditarily alter the structure of $D$.

\section{Embedding Classical Logic in LJF}
\label{classical}

We can use \ljf\  to formulate a focused sequent calculus for
classical logic that reveals the latter's constructive content in the
style of LC.  While it is possible to derive such a system again using
linear logic, classical logic can also be embedded within
intuitionistic logic using the {\em double-negation\/}
translations of G\"odel, Gentzen, and Kolmogorov.  These translations do
not, however, yield significant focusing features.
Girard's {\em polarized\/} version of the double negation translation
for LC approaches the problem of capturing duality in a more subtle
way.  Following the style of \ljf, we wish to define dual
versions of each propositional connective, which leads to a more
usable calculus.  We thus modify the LC translation in a natural way,
which is consistent with its original intent.  The proof system we derive is
called {\em LKF}.

We must first separate classical from intuitionistic polarity since
these are different notions (see the end of Section \ref{polarity}).
\begin{definition}\label{lkfpolarity}
Atoms are arbitrarily classified as either positive or negative.
The literal $\neg A$ has the opposite polarity of the atom $A$.
{\em Positive formulas\/} are among positive literals, $\cal T$, $\cal F$, 
$A\wedgep B$, $A\veep B$, $A\implp B$ and $\exists x A$. {\em Negative
formulas\/} are among negative literals, $\neg {\cal T}$, $\neg {\cal F}$,
$A\wedgen B$, $A\veen B$, $A\impln B$ and $\forall x A$.
Negation $\neg A$ is defined by de Morgan dualities\  $\neg A/A$,
$\wedgep/\veen$, $\wedgen/\veep$ and $\forall/\exists$.
Negative implication $A\impln B$ is defined as $\neg A \veen B$ and
$A\implp B$ is defined as $\neg A \veep B$.  
Formulas are assumed to be in negation normal form (that is,
formulas that do not contain implications and negations have
atomic scope).
\end{definition}

The constants $\cal T$, $\cal F$, $\neg \cal T$ and $\neg \cal F$ are
best described, respectively, as $1$, $0$, $\bot$ and $\top$ in linear
logic.  Just as we have dual versions of each connective, we also have
dual versions of each identity.  But this is not linear logic as the
formulas are polarized {\em in the extreme.\/}  The distinction between
the positive and negative versions of each connective affects only the
structure of proofs and not provability.

\begin{table}[ht]
\begin{center}
\begin{tabular}{|r|r||c|c|c|c|c|}
\hline
$\ddp{\cal A}$ & $\ddp{\cal B}$ & $\ddp{({\cal A}\wedgep {\cal B})}$ & 
$\ddp{({\cal A}\wedgen {\cal B})}$ & $\ddp{({\cal A}\veep {\cal B})}$ & 
$\ddp{({\cal A}\veen {\cal B})}$ & $\ddp{(\neg {\cal A})}$\\
\hline \hline
$A$ & $B$ & $A\wedgep   B$ & $\negg(\negg A \vee \negg B)$ &
$  A\vee   B$ & $\negg(\negg  A \wedgep \negg  B)$ & 
$\negg A$\\
\hline
$A$ & $\negg B$ & $  A \wedgep \negg B$ & $ \negg (\negg A \vee B)$ &
$  A \vee \negg B$ & $\negg (\negg A \wedgep   B)$ & $\cdot$\\
\hline
$\negg A$ & $B$ & $\negg A\wedgep   B$ & $\negg (A \vee \negg B)$ &
$\negg A \vee   B$ & $\negg (  A \wedgep \negg B)$ & $  A$\\
\hline
$\negg A$ & $\negg B$ & $\negg A \wedgep \negg B$ & 
$\negg (A\vee B)$ &
$\negg A \vee \negg B$ & $\negg (  A \wedgep   B)$ & $\cdot$\\
\hline
\end{tabular} 

\begin{tabular}{|r|r||c|c|c|c|@{\copy\tallstrutbox}}
\hline
$\ddp{\cal {\cal A}}$ & $\ddp{\cal B}$ & $\ddp{({\cal A}\implp {\cal B})}$ &
  $\ddp{({\cal A}\impln {\cal B})}$ 
& $\ddp{(\forall x {\cal A})}$ & $\ddp{(\exists x {\cal A})}$  \\
\hline \hline
$A$ & $B$ & $\negg  A \vee   B$ & $\negg (  A \wedgep \negg   B)$
& $\negg (\exists x \negg A)$ & $\exists x   A$ \\
\hline
$A$ & $\negg B$ & $\negg A \vee \negg B$ & $\negg (  A\wedgep   B)$ 
& $\cdot$ & $\cdot$\\
\hline
$\negg A$ & $B$ & $  A\vee   B$ & $\negg (\negg A\wedgep \negg B)$
& $\negg (\exists x   A)$ & $\exists x \negg A$\\
\hline
$\negg A$ & $\negg B$ & $  A\vee \negg B$ & $\negg (\negg A\wedgep   B)$ 
& $\cdot$ & $\cdot$\\
\hline
\end{tabular} 
\end{center}
\caption{Polarized embedding of classical logic.  
The $\ddp{(\cdot)}$ translation on compound formulas is given above
(there, $A$, $B$ represent formulas not preceded by $\sim$).  For
positive classical atom $P$, $\ddp{P} = P$; 
for negative classical atom $N$, $\ddp{N} = \negg N$; (where both  $P$
and $N$ are assigned positive intuitionistic polarity), and for the
logical constants $\ddp{\cal T} = \true$, $\ddp{\cal F} = \false$,
$\ddp{(\neg \cal T)} = \negg \true$, $\ddp{(\neg \cal F)} = \negg
\false$.}
\label{dnegtable}
\end{table}

Let $\negg\!A$ represent the intuitionistic formula $A\supset \phi$
where $\phi$ is {\em some unspecified positive atom.\/} The
``$\approx$'' embedding of classical logic is found in Table
\ref{dnegtable}.
Variations are possible on the embedding. Note that the classical
$\wedgen$ is not defined in terms of the intuitionistic $\wedgen$. The
embeddings are selected to enforce the dualities $\wedgen/\veep$ and
$\wedgep/\veen$.  Alternatives may also work, but will increase the
complexity of the derivation.  Here, the cases all follow
the pattern $P$ or $\negg P$ where $P$ is a positive intuitionistic
formula.  In particular, negative intuitionistic atoms are
not used in the embedding.

The $\approx$ embedding induces the {\em LKF\/} sequent calculus in
Figure~\ref{lkf} from the image of \ljf\ proofs, analogous to how
\ljf\ was derived from LLF.  
Here is one sample correspondence between a \ljf\ rule and a LKF rule:
\[
\vcenter{\infer[\wedgep L]{[\Delta],\Psi,  A\wedgep   B\lra [\phi]}
{[\Delta],\Psi,  A,   B\lra [\phi]}}
\qquad \longmapsto \qquad
\vcenter{\infer[\veen]{\vdash [\Theta],\Gamma, A \veen B}
{\vdash [\Theta],\Gamma, A, B}}
\]
Sequents of the form $\vdash [\Theta],\Gamma$ are unfocused while
those of the form $\mapsto [\Theta],A$ focus on the {\em stoup\/} formula $A$.

\begin{figure}[t]
\[
\infer[\mbox{$[]$}]{\vdash [\Theta], \Gamma, C}{\vdash [\Theta, C],\Gamma}
\qquad
\infer[Focus]{\vdash [P,\Theta]}{\mapsto [P,\Theta],P}
\qquad
\infer[Release]{\mapsto [\Theta], N}{\vdash [\Theta] , N}
\]
\[
\infer[ID^+, ~\mbox{atomic $P$}]{\mapsto [\neg P,\Theta] , P}{}
\qquad\qquad
\infer[ID^-, ~\mbox{atomic $N$}]{\mapsto [N,\Theta], \neg N}{}
\]
\[
\infer[indeed]{\mapsto [\Theta],{\cal T}}{}
\qquad
\infer[absurd]{\vdash [\Theta],\Gamma, \neg {\cal F}}{}
\qquad
\infer[trivial]{\vdash [\Theta],\Gamma, \neg {\cal T}}
{\vdash [\Theta],\Gamma}
\]
\[
\infer[\wedgen]{\vdash [\Theta],\Gamma,A\wedgen B}
{\vdash [\Theta],\Gamma,A & \vdash [\Theta],\Gamma,B}
\qquad
\infer[\veen]{\vdash [\Theta],\Gamma, A\veen B}
{\vdash [\Theta],\Gamma, A,B}
\]
\[
\infer[\impln]{\vdash [\Theta],\Gamma, A\impln B}
{\vdash [\Theta], \Gamma, B, \neg A}
\qquad
\infer[\forall]{\vdash [\Theta],\Gamma, \forall x A}
{\vdash [\Theta],\Gamma, A}
\]
\[
\infer[\wedgep]{\mapsto [\Theta], A\wedgep B}
{\mapsto [\Theta],A & \mapsto [\Theta],B }
\qquad
\infer[\veep]{\mapsto [\Theta],A_1\veep A_2}
{\mapsto [\Theta], A_i}
\qquad
\infer[\exists] {\mapsto [\Theta], \exists x A}
{\mapsto [\Theta], A[t/x]}
\]
\[
\infer[\implp]{\mapsto [\Theta], A\implp B}
{\mapsto [\Theta], \neg A}
\qquad
\infer[\implp]{\mapsto [\Theta], A\implp B}
{\mapsto [\Theta], B}
\]
\caption{The Classical Sequent Calculus LKF.  Here, $P$ is positive,
$N$ is negative, $C$ is a positive formula or a negative literal,
$\Theta$ consists of positive formulas and negative literals, and $x$
is not free in $\Theta$, $\Gamma$.  End-sequents have the form $\vdash
[],\Gamma$.}
\label{lkf}
\end{figure}

The following correctness theorem for LKF can be proved by 
relating it to the G\"odel-Gentzen translation (see
\cite[Section 9]{liang06report} for more details).
\begin{theorem}
LKF is sound and complete with respect to classical logic.
\end{theorem}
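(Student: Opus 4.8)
The plan is to prove both halves by the same ``grand tour'' device already used for \ljf, but now taking the soundness and completeness of \ljf\ established above as the platform and routing through a double-negation translation rather than through LLF directly. The central observation is that the $\approx$ embedding of Table~\ref{dnegtable}, in which $\negg A$ abbreviates $A\supset\phi$ for a fixed positive atom $\phi$, is a \emph{polarized} variant of the standard G\"odel--Gentzen negative translation. I would first record two properties of this embedding: that it respects \ljf-polarity (every positive classical connective is sent to a positive intuitionistic formula and every negative one to a negative intuitionistic formula, consistent with the $\wedgep/\veen$ and $\wedgen/\veep$ dualities and the negation-normal-form convention), and that $\ddp{\cal A}$ is classically equivalent to $\cal A$ because the inserted $\negg\negg$ layers are classically vacuous.

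The technical heart is a proof correspondence between \lkf\ and \ljf: \lkf\ derivations of an end-sequent $\vdash[\,],\Gamma$ should be in bijection with \ljf\ derivations of the $\approx$-image of $\Gamma$, with the fixed atom $\phi$ serving as the intuitionistic right-hand goal against which the negated subformulas are discharged. This mirrors precisely the way \ljf\ was read off from LLF. I would argue rule by rule, exhibiting for each \lkf\ inference the block of \ljf\ inferences it abbreviates --- the displayed correspondence $\veen\leftrightarrow\wedgep L$ is the representative case --- and checking that the two \lkf\ sequent shapes $\vdash[\Theta],\Gamma$ and $\mapsto[\Theta],A$ correspond, under $\approx$, to \ljf's unfocused sequent $[\Gamma]\lra[R]$ and its right- and left-focused sequents. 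Here the placement of zones is essential: the \lkf\ stoup $\Theta$ (positive formulas and negative literals) must land in \ljf's $[\Gamma]$ zone (negative formulas and positive atoms), which is exactly why the embedding translates a negative classical atom $N$ as $\negg N$ and assigns both $P$ and $N$ positive intuitionistic polarity.

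With the correspondence in place, neither direction of the theorem requires fresh machinery. For soundness, an \lkf\ proof of $\Gamma$ transports to an \ljf\ proof of $\ddp{\Gamma}$; the \ljf\ theorem then gives $\vdashi\ddp{\Gamma}$, hence $\vdashc\ddp{\Gamma}$, and since $\ddp{\Gamma}$ is classically equivalent to $\Gamma$ we obtain $\vdashc\Gamma$. For completeness, if $\vdashc\Gamma$ then the G\"odel--Gentzen theorem yields $\vdashi\ddp{\Gamma}$; \ljf\ completeness supplies an \ljf\ proof of $\ddp{\Gamma}$, and the correspondence turns it back into an \lkf\ proof of $\Gamma$.

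I expect the main obstacle to be showing that the rule-block correspondence is genuinely a bijection, and in particular that it is \emph{surjective} onto \ljf\ proofs of $\approx$-images: one must verify that \ljf's focusing discipline applied to a translated formula can decompose it only in the grouped pattern that a single \lkf\ rule abbreviates, so that no \ljf\ proof of an $\approx$-image lies outside the image of the translation. This demands careful polarity bookkeeping through the de Morgan dualities together with a check that the auxiliary atom $\phi$ never disrupts focusing --- it occurs only positively and is always consumed immediately by an identity rule --- so that it plays no role beyond closing the double-negation branches.
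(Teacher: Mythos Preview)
Your proposal is essentially the paper's own approach. The paper's in-text proof is merely a pointer---``can be proved by relating it to the G\"odel--Gentzen translation (see \cite[Section 9]{liang06report})''---but the surrounding discussion in Section~\ref{classical} makes clear that the intended argument is exactly the one you sketch: derive \lkf\ from \ljf\ via the $\approx$ embedding (the paper even displays the same $\wedgep L \leftrightarrow \veen$ correspondence you cite as representative), and then close the loop through the G\"odel--Gentzen translation to obtain classical completeness. Your identification of the surjectivity of the rule-block correspondence as the main technical obligation, and of the bridging step between $\approx$ and the literal G\"odel--Gentzen translation, is accurate and matches what ``relating it to'' signals in the paper.
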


We have constructed this embedding of classical logic as a further
demonstration of the abilities of \ljf\ as a hosting framework.  The
embedding also revealed interesting relationships between classical
and intuitionistic polarity.  It is also possible to
derive LKF from linear logic: 
each connective needs to be defined as either wholly positive or negative.  
For example, the translation of $(A\veen B)^p$ is
$A^p\lpar B^p$ if $A^p$ and $B^p$ are both negative; is
$A^p\lpar \quest B^p$ if only $A^p$ is negative; is
$\quest A^p\lpar B^p$  if only $B^p$ is negative; and is
$?A^p\lpar ?B^p$,  if $A^p$ and $B^p$ are both positive.
This translation is called the {\em ``polaro''\/}
translation in \cite{danos97jsl}, where it was used to formulate
$LK_{p}^\eta$, the first focused proof
system for classical logic.  Like the $\approx$ translation, the
polaro translation is a derivative of the LC/LU analysis of polarity.
Except for the treatment of atoms,
LKF is derivable from LLF using the polaro translation
in the same manner that \ljf\ is derived.

$LK_{p}^\eta$ was extended to $LK_{pol}^{\eta,\rho}$ in
\cite{laurent05apal}.  These systems were formulated independently of
Andreoli's results.  The authors of \cite{danos97jsl} opted not to
present $LK_p^\eta$ as a sequent calculus because they feared that it
will have the cumbersome size of LU.  Such cumbersomeness can, in
fact, be avoided by adopting LLF-style {\em reaction\/} rules.

Given our goals, the choice in adopting Andreoli's system is justified
in that LKF and LJF have the form of compact sequent calculi ready for
implementation.  More significantly perhaps, $LK_{p}^\eta$ and
$LK_{pol}^{\eta,\rho}$ define focusing for classical logic.  They map
to polarized forms of linear logic (LLP and $\mbox{LL}_{pol}$).  LLF
is defined for full classical linear logic.  LKF is embedded within
LLF in the same way that LC is embedded within LU.  LLF is well suited
for hosting other logics.

\section{Conclusion and Future Work}

We have studied focused proof construction in intuitionistic logic.
The key to this endeavor is the definition of polarity for
intuitionistic logic. The \ljf\ proof system captures focusing
using this notion of polarity.  We illustrate how systems such as LJ,
LJT, LJQ, and $\lambda$RCC can be captured within \ljf\ by assigning
polarity to atoms and by 
adding to intuitionistic logic formulas annotations on conjunctions
and delaying operators.  We also use \ljf\ to derive and
justify the \lkf\ focusing proof system for classical logic.

It remains to examine the impact of these focusing calculi on typed
$\lambda$-calculi, logic programming, and theorem proving.  Given the
connections observed between LJQ/LJT and call-by-name/value, the \ljf\
system could provide a framework for $\lambda$-term evaluations that
combine the eager and lazy evaluation strategies.  In the area of
theorem proving, there are a number of completeness theorems for
various restrictions to resolution: it would be interesting to see
if any of these are captured by an appropriate mapping into \lkf.

\paragraph{Acknowledgments}
We would like to thank the reviewers of an earlier version of this paper
for their comments.
The work reported here was carried out while the first author was on
sabbatical leave from Hofstra University to LIX.
The second author was supported by INRIA through the ``Equipes
Associ{\'e}es'' Slimmer and by the Information Society Technologies
program of the European Commission, Future and Emerging Technologies
under the IST-2005-015905 MOBIUS project.


\begin{thebibliography}{10}

\bibitem{andreoli92jlc}
Jean-Marc Andreoli.
\newblock Logic programming with focusing proofs in linear logic.
\newblock {\em J. of Logic and Computation}, 2(3):297--347, 1992.

\bibitem{chaudhuri06phd}
Kaustuv Chaudhuri.
\newblock {\em The Focused Inverse Method for Linear Logic}.
\newblock PhD thesis, Carnegie Mellon University, December 2006.
\newblock Technical report CMU-CS-06-162.

\bibitem{chaudhuri06ijcar}
Kaustuv Chaudhuri, Frank Pfenning, and Greg Price.
\newblock A logical characterization of forward and backward chaining in the
  inverse method.
\newblock In U.~Furbach and N.~Shankar, editors, {\em Proceedings of the 3rd
  International Joint Conference on Automated Reasoning (IJCAR'06)},
  LNCS 4130, pp.\ 97--111. Springer, August 2006.

\bibitem{curien00icfp}
Pierre-Louis Curien and Hugo Herbelin.
\newblock The duality of computation.
\newblock In {\em ICFP '00: Proceedings of the fifth ACM SIGPLAN international
  conference on Functional programming}, pages 233--243, New York, NY, USA,
  2000. ACM Press.

\bibitem{danos93wll}
Vincent Danos, Jean-Baptiste Joinet, and Harold Schellinx.
\newblock {LKQ} and {LKT}: sequent calculi for second order logic based upon
  dual linear decompositions of classical implication.
\newblock In Girard, Lafont, and Regnier, editors, {\em Workshop on Linear
  Logic}, pages 211--224. London Mathematical Society Lecture Notes 222,
  Cambridge University Press, 1995.

\bibitem{danos97jsl}
Vincent Danos, Jean-Baptiste Joinet, and Harold Schellinx.
\newblock A new deconstructive logic: Linear logic.
\newblock {\em Journal of Symbolic Logic}, 62(3):755--807, 1997.

\bibitem{dyckhoff06cie}
R.~Dyckhoff and S.~Lengrand.
\newblock {LJQ}: a strongly focused calculus for intuitionistic logic.
\newblock In A.~Beckmann et~al, editor, {\em Computability in Europe 2006},
  LNCS 3988, pages 173--185. Springer Verlag, 2006.

\bibitem{girard87tcs}
Jean-Yves Girard.
\newblock Linear logic.
\newblock {\em Theoretical Computer Science}, 50:1--102, 1987.

\bibitem{girard91mscs}
Jean-Yves Girard.
\newblock A new constructive logic: classical logic.
\newblock {\em Math. Structures in Comp. Science}, 1:255--296, 1991.

\bibitem{girard93apal}
Jean-Yves Girard.
\newblock On the unity of logic.
\newblock {\em Annals of Pure and Applied Logic}, 59:201--217, 1993.

\bibitem{herbelin95phd}
Hugo Herbelin.
\newblock {\em S\'equents qu'on calcule: de l'interpr\'etation du calcul des
  s\'equents comme calcul de lambda-termes et comme calcul de strat\'egies
  gagnantes}.
\newblock PhD thesis, Universit\'e Paris 7, 1995.

\bibitem{howe98phd}
J.~M. Howe.
\newblock {\em Proof {S}earch {I}ssues in {S}ome {N}on-{C}lassical {L}ogics}.
\newblock PhD thesis, University of St Andrews, December 1998.
\newblock Available as University of St Andrews Research Report CS/99/1.

\bibitem{jagadeesan05fsttcs}
Radha Jagadeesan, Gopalan Nadathur, and Vijay Saraswat.
\newblock Testing concurrent systems: An interpretation of intuitionistic
  logic.
\newblock In {\em Proceedings of FSTTCS}, 2005.

\bibitem{laurent05apal}
Olivier Laurent, Myriam Quatrini, and Lorenzo~Tortora de~Falco.
\newblock Polarized and focalized linear and classical proofs.
\newblock {\em Ann. Pure Appl. Logic}, 134(2-3):217--264, 2005.

\bibitem{levy06icalp}
P.~B. Levy.
\newblock Jumbo lambda-calculus.
\newblock In M.~Bugliesi, B.~Preneel, V.~Sassone, and I.~Wegener, editors, {\em
  ICALP06: 33rd International Colloquium on Automata, Languages and
  Programming}, LNCS 4052. Springer, July 2006.

\bibitem{liang06report}
Chuck Liang and Dale Miller.
\newblock On focusing and polarities in linear logic and intuitionistic logic.
\newblock Unpublished report, December 2006.

\bibitem{miller91apal}
Dale Miller, Gopalan Nadathur, Frank Pfenning, and Andre Scedrov.
\newblock Uniform proofs as a foundation for logic programming.
\newblock {\em Annals of Pure and Applied Logic}, 51:125--157, 1991.

\bibitem{pfenning04ln}
Frank Pfenning.
\newblock Automated theorem proving.
\newblock Lecture notes, March 2004.

\end{thebibliography}
\end{document}